\newcommand{\LONG}[1]{}
\newcommand{\FIGURE}[1]{#1}
\newcommand{\cu}[1]{\mathcal{#1}}
\newcommand{\Real}{{\mathbb{R}}}
\newcommand{\XXX}[1]{}
\newcommand{\remove}[1]{}
\newcommand{\ForgetXXX}[1]{}
\newcommand{\myqed}{\hfill$\Box$}
\newcommand{\NOBfull}{Increased Rate for Exceptional Nodes,
Identical Rate for Other Nodes}
\newcommand{\NOB}{IREN/IRON}
\newcommand{\Lat}{\mathcal{L}}
\newcommand{\LatInt}{\Lat_i}
\newcommand{\LatBorder}{\Delta{}\Lat}
\newcommand{\LatMap}{\lambda}
\newcommand{\LatRevMap}{\lambda^{-1}}
\newcommand{\DS}{\Delta{}S}
\newcommand{\CLat}{C^{(\Lat)}}
\newcommand{\mmin}{m_\mathrm{min}}
\newcommand{\mmax}{m_\mathrm{max}}
\newcommand{\Cmin}{C_\mathrm{min}}
\newcommand{\Ecost}{E_\mathrm{cost}}
\newcommand{\Erel}{E_\mathrm{rel-cost}}
\newcommand{\Ebound}{E_\mathrm{bound}}
\newcommand{\Mmax}{M_\mathrm{max}}
\newcommand{\yperedge}{yperarc}{}
\def\inprobHIGH{\,{\buildrel p \over \rightarrow}\,}
\newcommand{\mySmath}[1]{$#1$}
\newcommand{\LONGTC}[1]{}
\newcommand{\CORRECT}[1]{#1}
\newtheorem{property}{Property}
\newcommand{\reffig}[1]{Fig.~\ref{#1}}
\newcommand{\refth}[1]{Th.~\ref{#1}}
\newcommand{\reflem}[1]{lemma~\ref{#1}}
\newcommand{\refsec}[1]{section~\ref{#1}}
\newcommand{\refprop}[1]{Property~\ref{#1}}
\newcommand{\refeq}[1]{(\ref{#1})}
\newcommand{\mymath}[1]{$#1$}
\newcommand{\SQUEEZE}[1]{}
\begin{document}

\articletitle{Near Optimal Broadcast\\
~with Network Coding\\
~in Large Sensor Networks}

\author{C\'edric Adjih}%
\affil{Hipercom Team, INRIA Rocquencourt, France}%
\email{Cedric.Adjih@inria.fr}%
\author{Song Yean Cho}%
\affil{Hipercom Team, LIX, \'Ecole Polytechnique, Palaiseau, France}%
\email{Cho@lix.polytechnique.fr}%
\author{Philippe Jacquet}%
\affil{Hipercom Team, INRIA Rocquencourt, France}%
\email{Philippe.Jacquet@inria.fr}%

\begin{abstract}
We study efficient broadcasting for wireless sensor networks,
with network coding.
We address this issue for homogeneous sensor networks in the plane.
Our results are based on a simple principle (IREN/IRON),
which sets the same rate on most of the nodes (wireless links)
of the network. With this rate selection, we give
a value of the maximum achievable broadcast rate of the source:
our central result is a proof of the value of the \emph{min-cut}
for such networks, viewed as \emph{hypergraphs}.
Our metric
for efficiency is the number of transmissions necessary to transmit
one packet from the source to every destination:
we show that IREN/IRON
achieves near optimality for large networks;
that is, asymptotically, nearly every transmission brings new
information from the source to the receiver. As a consequence,
network coding asymptotically outperforms any scheme that does not use
network coding.

\end{abstract}

{}
\section*{Introduction}{}
\label{sec:introduction}

Seminal work{}
in \cite{Bib:ACLY00}
has introduced the idea of {\em network coding},
whereby intermediate nodes are mixing information from different flows
(different bits or different packets)%
{}.%
{}%

One logical domain of application is \emph{wireless sensor networks}.
Indeed, for wireless networks, a generalization of the results in 
\cite{Bib:ACLY00} exists: when 
{}
the capacity of the links are known and fixed, the maximal broadcast
rate of the source can be computed, as shown in
{}\cite{Bib:DGPHE06}{}.
Essentially, for one source, it is the min-cut
of the network{}
from the source to the destinations,
as for wired networks \cite{Bib:ACLY00}, but considering
\emph{hypergraphs} rather than graphs. This is true whether the rate and
the capacity are expressed in bits per second or packets per second
\cite{Bib:LMKE07}.

However, in wireless sensor networks, 
a primary constraint is not necessarily
the capacity of the wireless links: 
because of the limited battery of each node,
the limiting factor is the cost of wireless transmissions.
Hence a different focus is \emph{energy-efficiency}, rather
than the maximum achievable broadcast rate:
\begin{quote}
$\bullet$ Given one source,
minimize the total number of transmissions used to achieve the
broadcast to destination nodes.
\end{quote}

The problem is no longer related to the capacity, because the 
same transmissions can be streched in time, with an identical cost.
However, one can still imagine using network coding, where 
each node repeats combinations of packets with an average
interval between transmissions: this defines the rate of the node,
and the rate is an unknown. 

The problem of energy-efficiency is to compute
a set of transmission rates for each node, with minimal cost.
With network coding, the problem turns out to be solvable in
polynomial time: for the stated problem, 
\cite{Bib:WCK05,Bib:LRMKKHAZ06} describe methods
to find the optimal
transmission rate of each node with a linear program.
However, this does not
necessarily
provide direct insight about the optimal rates and their
associated optimal cost: those are obtained by solving the linear program
on instances of networks.

For large-scale sensor networks, one assumption could be that the nodes are
distributed in a homogeneous way, and a question would be: 
``Is there a simple near-optimal rate selection ?''
Considering the results of min-cut estimates for random graphs
\cite{Bib:RSW05,Bib:AKMK07,Bib:CB07},
one intuition is that most nodes have similar neighborhood; hence
the performance, when setting an identical rate for each node, deserves
to be explored.
This is the starting point of this paper, and
we will focus on homogeneous networks, 
which can be modeled as unit disk graphs:
\begin{compactenum}
\item We introduce a simple rate {}principle where most nodes
  have the same transmission rate: \NOB{} principle (\NOBfull{}).
\item We give a proof for the min-cut for some lattice graphs (modeled
  as hypergraphs). It is also an intermediate step for the following:
\item We deduce an  estimate of the min-cut for unit disk hypergraphs.
\item We show that this simple rate selection achieves ``near
 optimal performance'' in some classes of homogeneous networks, based on
min-cut computation --- and may outperforms any scheme
that is not using network
coding.%
{}
\end{compactenum}
The rest of this paper is organized as follows: \refsec{sec:network-model}
details the network model and related work; \refsec{sec:iren-iron}
describes the main results{};
\refsec{sec:min-cut-proof} gives proofs of the min-cut; and 
\refsec{sec:conclusion} concludes.

\section{Network Model and Related Work}{}%
{}
\label{sec:network-model}

In this article, we study the problem of broadcasting from one source
to all nodes.
We will assume an ideal wireless model,
wireless transmissions
without loss, collisions or interferences and that each node of
the network is operating well below its maximum transmission capacity.

Our focus is on large-scale wireless sensor networks.
Such networks have been modeled as
\emph{unit disk graphs} \cite{Bib:CCJ02} of the plane, 
where two nodes are neighbors whenever their distance is lower than
a fixed radio range; see \reffig{fig:unit-disk} for the principle
of unit disk graphs.
\FIGURE{%
\begin{figure}[htp]
\centering
\subfigure[Unit disk graph {}]{
\label{fig:unit-disk}
\includegraphics[width=.22\textwidth]{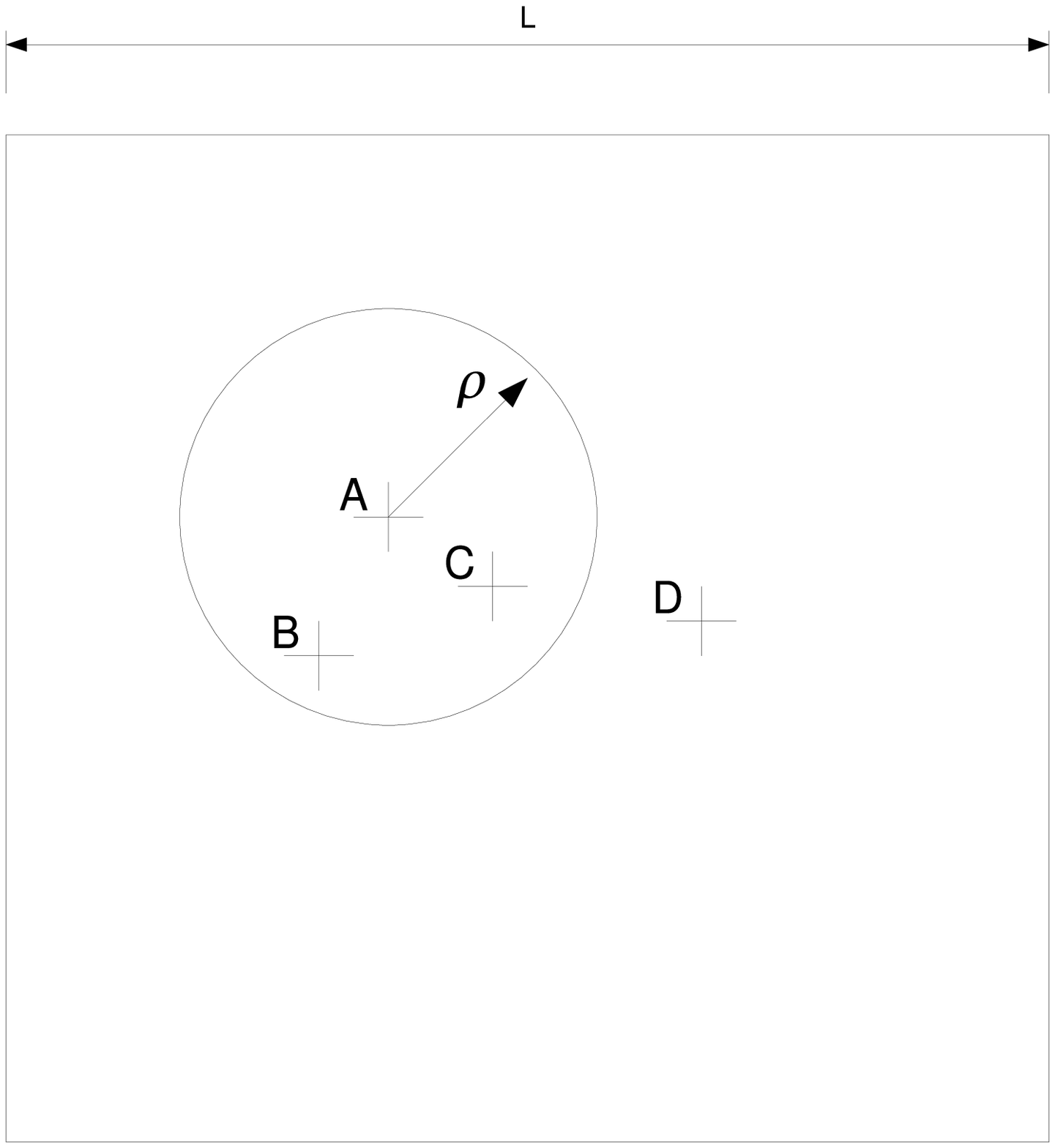}
}%
\hspace{.1in}%
\subfigure[Lattice]{
\label{fig:lattice}
\includegraphics[width=.22\textwidth]{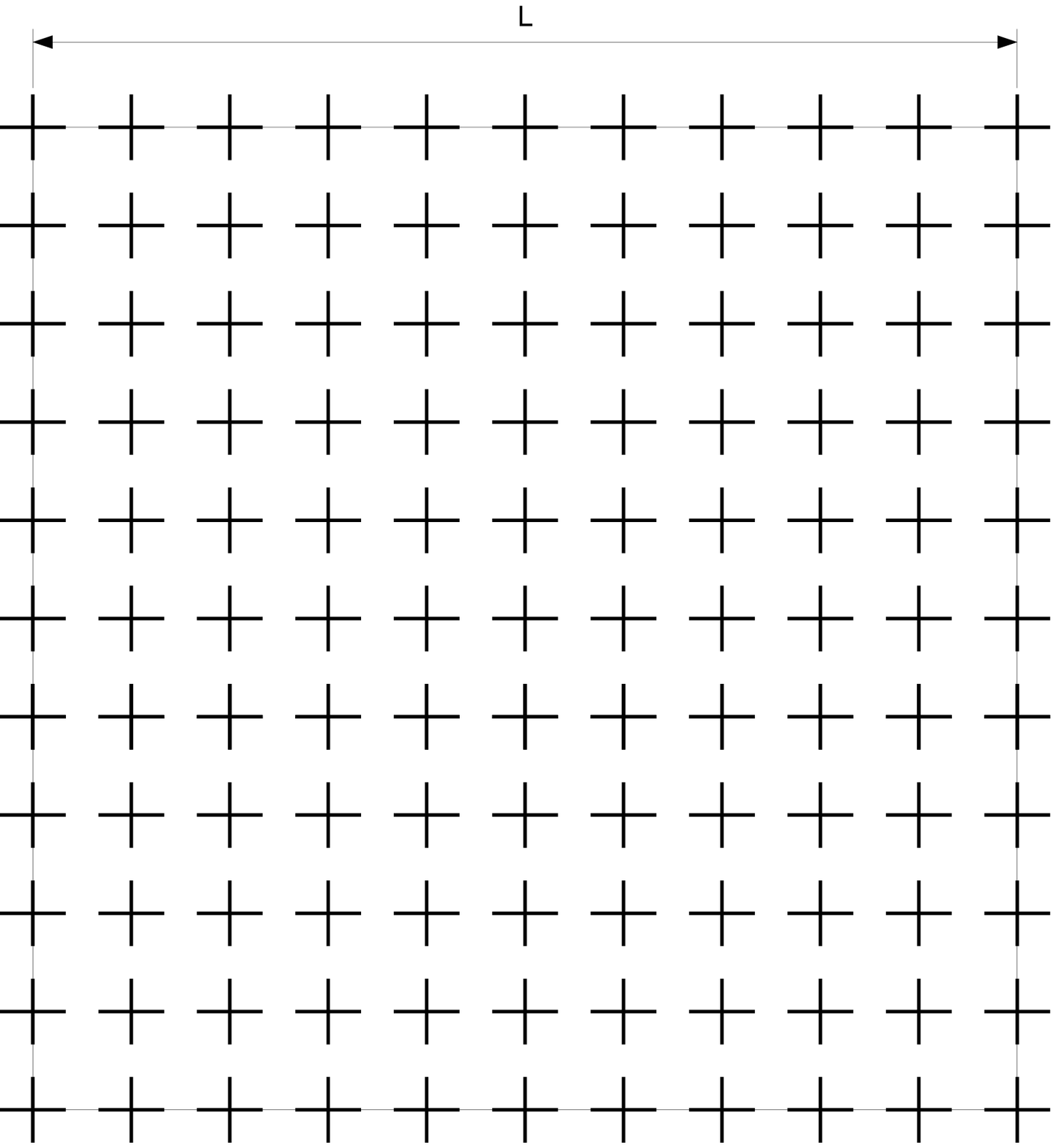}
}%
\hspace{0.01in}
\subfigure[Unit disk range $R$ for a lattice]{
\label{fig:lattice-disk}
\includegraphics[width=.23\textwidth]{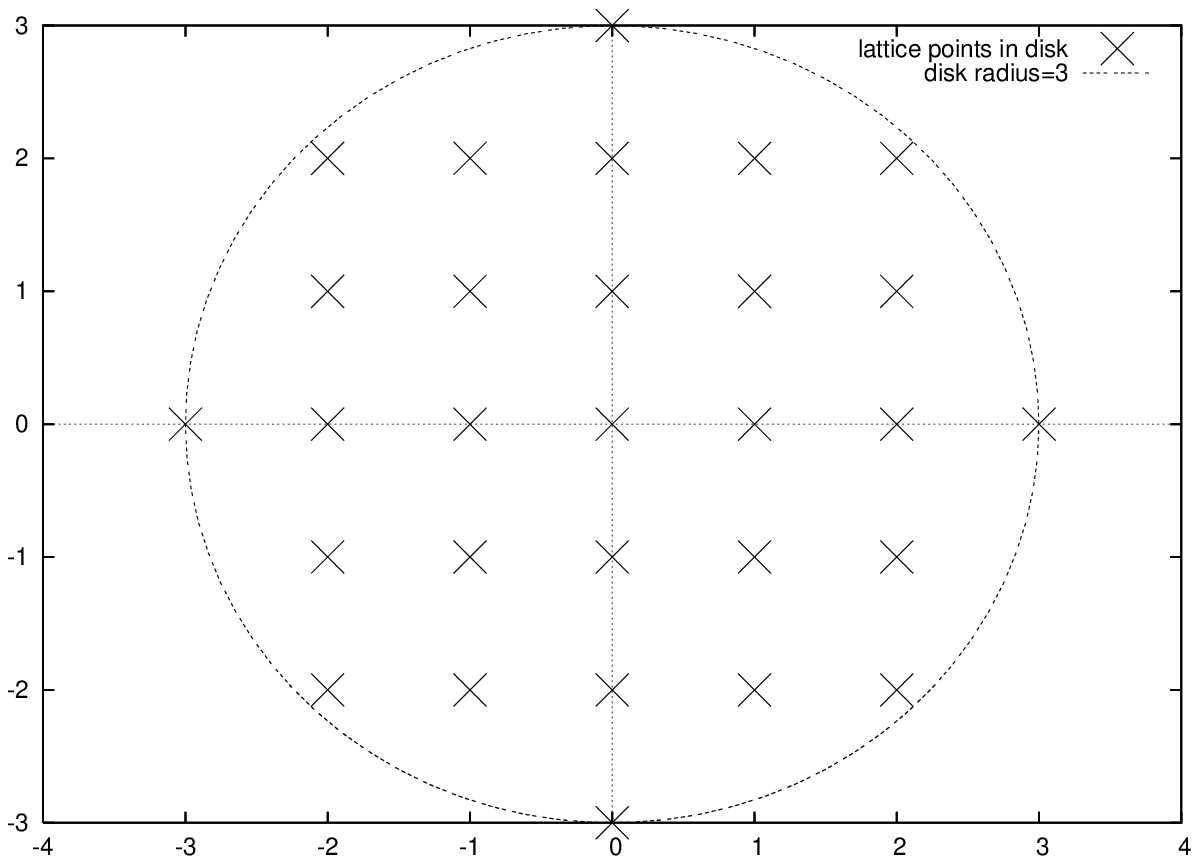}
}
{}%
\vspace{-3mm}
\caption{Network Models}
\end{figure}
}%
\vspace{-3mm}%
Precisely, the sensor networks considered will be:
\begin{compactitem}
\item Random unit disk graphs with nodes uniformly distributed
(\reffig{fig:unit-disk})
\item Unit disk graphs with nodes organized on a lattice
(\reffig{fig:lattice}).
\end{compactitem}
An important assumption is that the \emph{wireless broadcast advantage}
is used: each transmission is overheard by several nodes. As a result
the graph is in reality a \emph{(unit disk) hypergraph}%
{}%
{}.

\subsection{Related Work}

In general, specifying the network coding protocol is reduced to specifying
the transmission rates for each node \cite{Bib:LMKE05}. %
{}%
Once the optimal rates are
computed, the performance can be asymptotically achieved with 
\emph{distributed random linear coding},
for instance \cite{Bib:HKMKE03,Bib:LMKE07}.
{}This article is in the spirit of \cite{Bib:FWB06}, which
starts with exhibiting an energy-efficient algorithm for simple networks.
The central element for computing the performance is the estimation of
the min-cut of the network and we are inspired by the existing
techniques and results surrouding the expected value of the min-cut 
on some classes
of networks: for instance, \cite{Bib:RSW05} explored the capacity
of networks
where a source is two hops from the destination, through a
network of relay nodes; \cite{Bib:AKMK07} studied some classes
of random geometric graphs.
Recently, \cite{Bib:CB07} gave bounds of the min-cut of dual radio networks.

{}
\subsection{Network Coding}{}
\label{sec:network-coding}

In the network coding literature, several results are for multicast,
and, in this section, they are quoted as such. They
apply to the topic of this article, broadcast, since broadcast is a 
special case of multicast. 

A central result for %
network coding in wireless networks 
gives
the maximum multicast rate for a source.
The capacity 
is given by the min-cut
from the source to each individual destination of the network,
viewed as a hypergraph
\cite{Bib:DGPHE06,Bib:LRMKKHAZ06}.%
\label{sec:hypergraph-min-cut}%
 A precise description includes:\\{}%
{}%
{}%
{\bf $\bullet$ Nodes:} $\cu{V} = \{ v_i, i=1,\ldots{}N \}$, set of 
nodes of the hypergraph {}\\
{\bf $\bullet$ H\yperedge{}:}  $h_v = (v, H_v)$, where $H_v \subset V$ is
  the subset of nodes that are reached by one transmission of node $v$%
 (neighbors){}{}.\\
{}
{\bf $\bullet$ Rate:} Each node $v$ emits on the h\yperedge{} $(v, H_v)$
  with {}rate $C_v$.
{}%

Let us consider the source $s$, and one of the multicast
destinations $t \in \cu{V}$. {} 
The definition of an \emph{$s$-$t$ cut} is: a partition of the set
of nodes $V$ in two sets $S$, $T$ such as $s \in S$ and $t \in T$.
Let $Q(s,t)$ be the set of such \emph{$s$-$t$ cuts}: $(S,T) \in Q(s,t)$.
{}%

{}%

We denote $\DS$, the set of nodes of $S$ that are neighbors of at
least one node of $T$;
the \emph{capacity of the cut} $C(S)$ is defined as the maximum rate
between the nodes in $S$ and the nodes in $T$:
\vspace{-3mm}
\begin{equation}
\label{eq:deltaS}
\label{eq:cut-capacity}
\DS \triangleq \{ v \in S : H_v \cap T \neq \emptyset \}
~~~\mathrm{and}~~~
C(S) \triangleq \sum_{v \in \Delta{}S} C_v
\end{equation}
\vspace{-3mm}%
{}%

{}

The \emph{min-cut} between $s$ and $t$ is the cut of $Q(s,t)$ with the
minimum capacity. Let us denote $C_\mathrm{min}(s,t)$ as its capacity.
From \cite{Bib:DGPHE06,Bib:LRMKKHAZ06},
the maximum multicast capacity is given by the minimum of capacity
of the min-cut of every destination, $C_\mathrm{min}(s)$, with:
\vspace{-3mm}
\begin{equation}
\label{eq:s-t-cut}
C_\mathrm{min}(s,t) \triangleq \min_{(S,T) \in Q(s,t)} C(S)
~~~\mathrm{and}~~~
C_\mathrm{min}(s) \triangleq \min_{t \in \cu{V} \setminus \{s\}} 
C_\mathrm{min}(s,t)
\end{equation}
\vspace{-6mm}%
{}%

{}

{}
\section{Main Results}{}%
\label{sec:iren-iron}

\subsection{Overview}

As described in the introduction, our approach is to choose
an intuitive transmission rate for each node:
essentially, the same rate for most nodes%
{},{} as described 
in \refsec{sec:rate}.
Then, we determine the maximum broadcast rate that can be achieved to transmit
from the source to every
node in the network as the min-cut of the hypergraph,
for both random and lattice graphs
in \refsec{sec:min-cut-overview}. And finally,
from the expression of the cost in~\refsec{sec:cost}, we deduce asymptotic
optimality (\refsec{sec:near-optimal}).

\subsection{Further Definitions}
\label{sec:futher-definitions}
{}%

Consider a network inside a square area $G$ of edge length $L$,
such as the one on \reffig{fig:unit-disk}.
{}%
\begin{compactitem}
\item The radio range of the network is $\rho$.
\item For a lattice, we denote $R$ the set of neighbors of the origin
node $(0,0)$, as represented on \reffig{fig:lattice-disk}:
$R \triangleq \{ (x,y) \in \mathbb{Z}^2 : x^2+y^2 \le \rho^2 \}$
\item Let $M$ be the ``expected'' number of neighbors of one node.
For a lattice, it is $M = |R|-1$.
For a random disk unit graph with $N$ nodes,
$M$ is related to the density $\mu = \frac{N}{L^2}$ and range as follows:
$M=\pi \rho^2 \mu = \pi \rho^2 \frac{N}{L^2}${}.
\end{compactitem}
We define the \emph{border area} as the area of fixed width $W > \rho$ 
near the edge of that square, and \emph{border nodes}
as the nodes {}in that area. Hence,
the area $L \times L$ of $G$ is partitioned into:
\begin{compactitem}
\item $\Delta{}G$, the border, with area $A_{\Delta{}G} = 4 W (L-W)$
{}
\item $G_i$, the ``interior'' $G_i \triangleq G \setminus \Delta{}G$, 
with area $A_{G_i} = (L - 2 W)^2$
\end{compactitem}

\subsection{Rate Selection with \NOB}
\label{sec:rate}

The principle \NOB{} sets the following transmission rates:
{}%
\begin{compactitem}{}
\item IREN (Increased Rate for Exceptional Nodes): 
  the rate of transmission is set to $M$, for the source node
  and all the border nodes (the ``exceptional'' nodes).
\item IRON (Identical Rate for Other Nodes):
  every other node{} transmits with rate $1$.
\end{compactitem}{}
{}%

{}

\subsection{Performance: Min-Cut (Achievable Broadcast Rate)}
\label{sec:min-cut-overview}

The essence of our main result is the following property
proved in \refsec{sec:lattice-proof}, \refth{th:min-cut-lattice}:

\begin{property}
\label{prop:lattice}
With the rate selection \NOB{}, the min-cut of a lattice graph 
is {}equal to $C_\mathrm{min} = M$ (with $M = |R|-1$).
\end{property}

For random unit disk graphs, 
by mapping the points to an imaginary lattice graph
(\emph{embedded lattice}) as an intermediary step,
we are able to find bounds of the capacity of random unit disk
graphs. This turns out to be much in the spirit 
of \cite{Bib:CB07}.
This is used to
deduce an asymptotic result for unit disk graphs,
proven in \refsec{sec:unit-disk-proof}, \refth{th:Cmin-limit}:
\begin{property}
\label{prop:unit-graph}
Assume a fixed range.
For a sequence of random unit disk graphs $(\cu{V}_i)$,
with sources $s_i$,
with size $L \rightarrow \infty$ and with a density
$M \rightarrow \infty$ such as $M = {}L^\theta{}$,
for any fixed $\theta>0$, we have the following convergence in probability:
$\frac{C_{\min}(s)}{M} \inprobHIGH 1$.

\end{property}

\subsection{Performance: Transmission Cost per Broadcast}
\label{sec:cost}

Recall that the metric for cost
is the number of (packet) transmissions per a (packet) broadcast from
the source to the entire network. 
Let us denote $E_\mathrm{cost}$ as this ``transmissions per broadcast.''

This cost of broadcasting with \NOB{} rate selection
can be equivalently computed from the rates as the ratio
of the number of transmissions per unit time to the number of packets
broadcast into the network per unit time.
 {}
Then $\Ecost$ is deduced from the min-cut $\Cmin$,
{} the areas $A_{\Delta{}G}, A_{G_i}$,
the associated node rates
{} and the node density $\mu$.
For fixed $W$, $M,L \rightarrow \infty$:
\mymath{E_\mathrm{cost} = \frac{1}{\Cmin} \mu L^2 \left(1 + O(\frac{1}{L}) 
+ \frac{4MW}{L} (1+ O(\frac{1}{L})) \right)}.{}

For random unit disk graphs{}, $E_\mathrm{cost}$ is an
expected value{}, and
$\mu=\frac{N}{L^2}$. For a lattice, $\mu = 1$.
\subsection{Near Optimal Performance for Large Networks}
\label{sec:near-optimal}

Sections~\ref{sec:min-cut-overview} and~\ref{sec:cost} gave the
performance and cost with the {}\\IREN/IRON{} principle.
{}The optimal cost
is not easily computed, and in this section an
indirect route is chosen, by using a bound.

Assume that every node has at most $M_\mathrm{max}$ neighbors: 
one single transmission can provide information to
$M_\mathrm{max}$ nodes at most. Hence, in order
to broadcast one packet to all $N$ nodes, at least 
$\Ebound = \frac{N}{M_\mathrm{max}}$
transmissions are necessary. 

{}
W.r.t. this bound, let the relative cost be:
$\Erel = \frac{\Ecost}{\Ebound} \ge 1$.

We will prove that $\Erel \rightarrow 1$
for the following networks:

\subsubsection{Lattice Graphs{}}

{}
For lattice{}s, $W$ and the neighborhood $R$ are kept fixed 
(hence also $M = |R|-1$)%
{} and
only the size $L$ of the network increases to infinity. The number
of nodes is $N=L^2${}.
The maximum number of neighbors $M_\mathrm{max}$ is exactly 
$M_\mathrm{max} = M$. 

From \refsec{sec:cost} and from \refprop{prop:lattice}, we have:\\
\mymath{E_\mathrm{rel-cost}=E_\mathrm{cost}\frac{M_\mathrm{max}}{N}
= \left(1 + O(\frac{1}{L}) 
+ \frac{4MW}{L} (1+ O(\frac{1}{L})) \right) = 1+O(\frac{1}{L})}{}%
{}.

\subsubsection{Random Unit Disk Graphs{}}

For random unit disk graphs, first notice that an increase of the density 
$M$ does not improve the relative cost 
$\Erel${}.
Now consider a sequence of random graphs, as in \refprop{prop:unit-graph},
with fixed {}$\rho$, fixed {}$W$, and
size $L \rightarrow \infty$ and with a density
$M \rightarrow \infty$ such as $M = {}L^\theta{}$,
for some arbitrary fixed $\theta>0$, with the additional constraint that 
$\theta < 1$. We have:

\mymath{\Erel = \Ecost\frac{\Mmax}{N}
= \frac{M}{\Cmin}\frac{\Mmax}{M}\frac{\mu L^2}{N}\left(1 + O(\frac{1}{L}) 
+ \frac{4MW}{L} (1+ O(\frac{1}{L})) \right)}.{}%
{}%
\CORRECT{

Each of part of the product converges toward $1$, either surely, 
or in probability:
using \refprop{prop:unit-graph},
we have the convergence of $\frac{\Cmin}{M} \inprobHIGH 1$,
when $L \rightarrow \infty$ and similarly with \refth{th:Cmin-limit} 
we have $\frac{\Mmax}{M} \inprobHIGH 1$. 
By definition $N = \mu L^2$. Finally, 
$M = {}L^\theta{}$ for $\theta < 1$ implies that 
$\frac{4MW}{L} \rightarrow 0$.

As a result we have:
\mymath{\Erel \inprobHIGH 1} in probability, when $L \rightarrow \infty$}

\subsubsection{Random Unit Disk Graphs without Network Coding{}}

In order to compare the results that are obtained when network coding
is not used, one can reuse the argument of \cite{Bib:FWB06}.
Consider the broadcasting of one packet.
Consider one node of the network that has repeated the packets.
It must have received the transmission
from another connected neighbor.
In a unit disk graph, these
two connected neighbors share a neighborhood area at least equal to
$(\frac{2 \pi}{3} - \frac{\sqrt{3}}{2}) \rho^2$, and every node
lying within that area will receive duplicate of the packets. 
Considering this inefficiency,
for dense unit disk graphs one can deduce the following bound:
\mymath{\Erel^\mathrm{(no-coding)} \ge \frac{6 \pi}{2 \pi + 3 \sqrt{3}}}
. Notice that $\frac{6 \pi}{2 \pi + 3 \sqrt{3}} \approx 1.6420\ldots > 1$.

\subsubsection{Near Optimality{}}
\label{sec:near-optimality}

The asymptotic optimality is a consequence of the convergence of the
cost bound $\Erel$ toward $1$. 
{}This{} indirect proof is in fact 
a stronger statement than optimality of the rate selection
in terms of energy-efficiency: it exhibits the fact that asymptotically
(nearly) all the transmissions will be \emph{innovative} for the receivers.
Note that it is not the case in general for a given instance of a hypergraph.
It evidences the following remarkable fact for the large homogeneous
networks considered: 
network coding may be achieving 
not only optimal efficiency, but
also, asymptotically, perfect efficiency ---
achieving the information-theoretic bound for each transmission.

Notice that traditional broadcast methods without network coding 
(such as the ones based on connected dominating sets) cannot
achieve this efficiency, since their
lower bound is $1.642$.

\section{Proofs of the Min-Cut}{}
{}
\label{sec:min-cut-proof}

\subsection{Proof for Lattice Graphs}
\label{sec:lattice-proof}
\subsubsection{Preliminaries{}}
Let $\Gamma$ be full, {}\emph{integer lattice} 
in $n$-dimensional space;
it is the set $\mathbb{Z}^n$, where the lattice points are $n$-tuples
of integers.

For lattice graphs, only points on the full lattice are relevant; therefore in
this section, the notations $\Lat, \LatInt, \LatBorder$ will be used,
for the parts of the full lattice $\Gamma$ that are in $G, G_i, \Delta G$
respectively.{}

The proof is based on the use of the Minkowski addition, and 
a specific property of discrete geometry~\refeq{eq:mink-ineq} below.
The Minkowski addition is a classical way to express
the neighborhood of one area (for instance, see \cite{Bib:LKE98} and 
the figure 3(a), and figure 4 of that reference).

Given two sets $A$ and $B$ of $\Real^n$, the Minkowski sum 
of the two sets  $A \oplus B$ is defined as the set of all
vector sums generated by all pairs of points in $A$ and $B$,
respectively:
\mymath{A \oplus B \triangleq \{ a + b : a \in A, b \in B \}}

{}
Then the set of neighbors $\cu{N}(t)$ of one node $t$, with $t$ itself, is:\\
\mymath{\cu{N}(t) \cup \{ t \} = \{ t \} \oplus R}
\\
This extends to the neighborhood of a set of points.
For Minkowski sums on the lattice $\Gamma$, there exist variants of the 
\emph{Brunn-Minkowski inequality}, including the following
one \cite{Bib:GG01}:
\begin{property}
For two subsets $A, B$ of the integer lattice $\mathbb{Z}^n$,
\begin{equation}
\label{eq:mink-ineq}
|A \oplus B| \ge |A| + |B| - 1
\end{equation}
where $|X|$ represents the number of elements of a subset $X$
of $\mathbb{Z}^n$
\end{property}

\subsubsection{Bound on the capacity of one cut $C(S)${}}
{}
Consider a lattice $\Lat$ and a source $s$.
{}
Let $C(S)$ be the capacity of {} an $s$-$t$ cut $S,T \in Q(s,t)$.

{}
\begin{lemma}
\label{lem:cut-bound}
$C(S) \ge |\DS|$ (with $\DS$ defined in
{}\refeq{eq:cut-capacity}{})
\end{lemma}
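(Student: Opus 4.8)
The plan is to prove $C(S) \ge |\DS|$ directly from the definitions, without yet invoking the Brunn–Minkowski inequality~\refeq{eq:mink-ineq} (that ingredient will be needed later, to push the bound on a single cut into the global min-cut statement of \refth{th:min-cut-lattice}; here we only need the easy half). Recall from~\refeq{eq:cut-capacity} that $C(S) = \sum_{v \in \DS} C_v$ and that $\DS = \{ v \in S : H_v \cap T \neq \emptyset \}$. So the claim is simply that $\sum_{v \in \DS} C_v \ge |\DS|$, i.e.\ that the rates of the nodes in the cut-boundary sum to at least the number of such nodes.

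First I would observe that under the \NOB{} rate selection every node carries rate either $1$ (the ``IRON'' nodes) or $M$ (the source and border nodes, ``IREN''), and in all cases $C_v \ge 1$ since $M = |R| - 1 \ge 1$ for any nondegenerate range $R$. Hence each term in the sum $\sum_{v \in \DS} C_v$ is at least $1$. Summing over the $|\DS|$ terms gives $C(S) = \sum_{v \in \DS} C_v \ge \sum_{v \in \DS} 1 = |\DS|$, which is the desired inequality.

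The only subtlety worth spelling out is the edge case $\DS = \emptyset$: then $C(S) = 0 = |\DS|$ and the inequality holds trivially (and in fact, for an actual $s$-$t$ cut with $t$ reachable from $s$, $\DS$ is nonempty, but we do not even need that here). I would also note explicitly that this lemma uses nothing about the lattice structure beyond the fact that the rate assignment is bounded below by $1$; it is the companion lower bound, together with the matching construction of a cut of capacity exactly $M$, that will require the geometric input. So I expect no real obstacle in this lemma — the work is deferred to the next step, where one must exhibit a cut achieving $|\DS| = M$ and show no cut does better, and that is where~\refeq{eq:mink-ineq} enters.
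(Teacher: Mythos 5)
Your proof is correct and follows exactly the paper's own argument: under IREN/IRON every node has $C_v \ge 1$, so $C(S) = \sum_{v \in \DS} C_v \ge |\DS|$. The extra remarks about the empty-boundary edge case and about where the Brunn--Minkowski input actually enters are accurate but not needed for this lemma.
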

\begin{proof}
$C_v \ge 1$ with \NOB{} and with \refeq{eq:cut-capacity},
$C(S) = \sum_{v \in \DS} C_v$
\end{proof}
{}

\begin{theorem}
\label{th:capacity-bound}
The capacity of one cut $C(S)$ is such that: 
\mymath{C(S) \ge M}
\end{theorem}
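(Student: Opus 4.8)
The plan is to reduce the statement, via Lemma~\ref{lem:cut-bound}, to a purely combinatorial inequality $|\DS|\ge M$, and then to obtain that inequality from the discrete Brunn--Minkowski inequality~\refeq{eq:mink-ineq}, using the boosted \NOB{} rates only to dispatch the cuts whose interface touches the boundary frame.

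First I would split on whether $\DS$ contains an \emph{exceptional} node. If it does --- say the source or a border node $v_0$ --- then IREN gives $C_{v_0}=M$, and since all $C_v$ are positive, $C(S)\ge C_{v_0}=M$ and we are done. So the interesting case is when $\DS$ contains no exceptional node; then $\DS\subseteq\LatInt$, $s\notin\DS$, and (Lemma~\ref{lem:cut-bound}) it remains only to show $|\DS|\ge M$.

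The key observation is that in this case one of $S$, $T$ must lie entirely in $\LatInt$. Indeed $\LatBorder$ is a connected subset of $\Lat$ disjoint from $\DS$, so it cannot straddle the cut (a straddle would force an $S$--$T$ edge inside $\LatBorder$, hence a border node in $\DS$); thus $\LatBorder\subseteq T$, so $S\subseteq\LatInt$, or $\LatBorder\subseteq S$, so $T\subseteq\LatInt$. The condition $W>\rho$ now matters: a set $X\subseteq\LatInt$ has its whole range-$\rho$ neighbourhood inside $G$, i.e.\ $X\oplus R\subseteq\Lat$, so on the interior side Minkowski sums with $R$ behave like honest graph neighbourhoods. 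If $T\subseteq\LatInt$, one checks $T\oplus R=T\sqcup\DS$, and since $T\neq\emptyset$ and $R\neq\emptyset$, \refeq{eq:mink-ineq} gives $|T|+|\DS|=|T\oplus R|\ge|T|+|R|-1$, i.e.\ $|\DS|\ge M$. If instead $S\subseteq\LatInt$, the natural object to dilate is the ``interior'' of $S$, $S^\circ\triangleq S\setminus\DS$ (nonempty, since $s\in S^\circ$): every neighbour of a vertex of $S^\circ$ lies in $S$, so $S^\circ\oplus R\subseteq S=S^\circ\sqcup\DS$, and \refeq{eq:mink-ineq} forces $|S^\circ|+M\le|S^\circ\oplus R|\le|S^\circ|+|\DS|$, again $|\DS|\ge M$. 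Combining with Lemma~\ref{lem:cut-bound} yields $C(S)\ge M$ in all cases.

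I expect the delicate part to be the border bookkeeping rather than the Minkowski step: one has to argue that IREN's rate-$M$ vertices cover precisely the cuts whose interface meets $\LatBorder$, which needs connectedness of $\LatBorder$ and nonemptiness of $\LatInt$ (the latter being forced, in the degenerate small-square cases, by $s\in S$), plus the width condition $W>\rho$ to keep $X\oplus R$ inside $\Lat$. One also has to notice that in the branch $S\subseteq\LatInt$ the right set to expand is $S^\circ$, not $T$: there $T$ contains the whole border, $T\oplus R$ escapes $G$, and the clean identity $T\oplus R=T\sqcup\DS$ fails. Everything else is a direct application of~\refeq{eq:mink-ineq} with the second summand equal to $R$.
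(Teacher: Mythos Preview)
Your argument is correct and rests on the same two ingredients the paper uses --- the discrete Brunn--Minkowski inequality~\refeq{eq:mink-ineq} applied with second summand $R$, and the connectedness of the border frame $\LatBorder$ --- but you organize the case split differently. The paper partitions according to how $T$ meets $\LatBorder$ (empty intersection, full inclusion, or proper nonempty intersection), proving the first case by dilating $T$, sketching the second as ``similar, but considering the source'', and handling the third via a border path that produces a rate-$M$ node in $\DS$. You instead split first on whether $\DS$ already contains an exceptional (rate-$M$) node; this immediately absorbs the paper's third case \emph{and} the subcase of the second where $s\in\DS$, and it cleanly isolates the purely combinatorial remainder where all of $\DS$ sits in $\LatInt$ with unit rates. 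Your explicit treatment of the branch $S\subseteq\LatInt$ by dilating $S^\circ=S\setminus\DS$ (nonempty because $s\in S^\circ$) is exactly what the paper's ``considering the source'' hint is pointing at, and making it explicit is a genuine clarification. The trade-off: the paper's decomposition is perhaps more geometric (it tracks where the border sits), while yours is more algebraic (it tracks where the high rates sit) and dispatches the rate-dependent cases in one line before reducing everything else to $|\DS|\ge M$.
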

\emph{Proof}:
There are three possible cases, 
either the set $T$ has no common nodes 
with the border $\LatBorder$, or $T$ includes all nodes of $\LatBorder$, 
or finally $T$ includes only part of nodes in the border area. 

{\bf First case}, $T \cap \LatBorder = \emptyset$:

{}We{}
know that $T \oplus R \subset \Lat$,
hence we can effectively write the neighbors of nodes in $T$ as a Minkowski
addition (without getting points in $\Gamma$ but out of $\Lat$):
\mymath{\Delta T \triangleq (T \oplus R) \setminus T}

It follows that:
\mySmath{|\Delta{}T| \ge |T \oplus R | - | T|}

Now the inequality~(\ref{eq:mink-ineq}) can be used:
\mySmath{|T \oplus R| \ge |T| + |R| -1}

Hence we get:
\mySmath{|\Delta T| \ge |T| + |R| - 1 - |T|}, and therefore: 
\vspace{-1mm}
\begin{equation}
\label{eq:tmp1}
|\Delta T| \ge |R| - 1
\end{equation}
\vspace{-6mm}

Recall that $S$ and $T$ form a partition of $\Lat$ ;
and since $\Delta T$ is a subset of $\Lat$, by definition without any point of
of $T$, we have $\Delta T \subset S$. 
Hence actually $\Delta T \subset \Delta S$ (with the definition of 
$\Delta S$ in \refeq{eq:cut-capacity}).
We can combine this fact with \reflem{lem:cut-bound}
{}%
and \refeq{eq:tmp1}, to get:

$|C(S)| \ge |R| - 1$
and the \refth{th:capacity-bound} is proved for the first
case.

The second case is similar, but considering the source, 
while the third case uses the fact that a path can be found in the border
between any two border nodes
\cite{Bib:preprint}. \myqed{}%

\subsubsection{Value of the Min-cut $\Cmin(s)$}\hfill{}
\label{sec:th-lattice}
The results of the previous section immediately result in
a property on the capacity of every $s$-$t$ min-cut:
\begin{theorem}
\label{th:min-cut-lattice}
For any $t \in \Lat$ different from the source $s$:

\mymath{\Cmin(s,t) = M} ; and as a result: $\Cmin(s) = M$

\end{theorem}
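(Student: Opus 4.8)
The plan is to sandwich $\Cmin(s,t)$ between $M$ and $M$ for every destination $t \neq s$, and then read off $\Cmin(s)$ from its definition. The lower bound is already done: by \refth{th:capacity-bound}, every $s$-$t$ cut $(S,T) \in Q(s,t)$ satisfies $C(S) \ge M$, so taking the minimum over $Q(s,t)$ in \refeq{eq:s-t-cut} gives $\Cmin(s,t) \ge M$ for all $t \neq s$.

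For the matching upper bound, I would exhibit a single explicit cut of capacity exactly $M$, namely the \emph{source cut} $S = \{s\}$, $T = \Lat \setminus \{s\}$. This is an admissible $s$-$t$ cut for every $t \neq s$, since $s \in S$ and $t \in T$. As $s$ has at least one neighbor (we are in the regime $M \ge 1$), we have $H_s \cap T \neq \emptyset$, and since $s$ is the only element of $S$, the definition \refeq{eq:cut-capacity} gives $\DS = \{s\}$, hence $C(S) = C_s$. Under \NOB{}, the source is an ``exceptional'' node, so $C_s = M$ by IREN; therefore $C(S) = M$ and $\Cmin(s,t) \le M$. Combining the two inequalities yields $\Cmin(s,t) = M$ for every $t \in \Lat \setminus \{s\}$.

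Finally, $\Cmin(s) = \min_{t \in \Lat \setminus \{s\}} \Cmin(s,t) = M$ follows immediately from \refeq{eq:s-t-cut}, because the common value $M$ does not depend on $t$.

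Since all the quantitative work is carried by \refth{th:capacity-bound}, I do not expect a genuine obstacle here; the only point needing a moment's care is verifying that the source cut is really admissible and has $\DS = \{s\}$ (i.e.\ that $s$ transmits to someone), which is automatic for any non-degenerate lattice. As an alternative for interior $t$ one could use the singleton cut $T = \{t\}$, whose capacity is $\sum_{v \in \cu{N}(t)} C_v = M$ when every neighbor of $t$ is an ordinary IRON node; but the source cut is preferable because it works uniformly in $t$, so I would present that one.
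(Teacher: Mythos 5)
Your proof is correct and follows the same route as the paper: lower bound from Theorem~\ref{th:capacity-bound}, upper bound via the source cut $S=\{s\}$, $T=\Lat\setminus\{s\}$ whose capacity is $C_s=M$ under IREN, then $\Cmin(s)$ from the definition. No meaningful differences.
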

\begin{proof}
{}
From~\refth{th:capacity-bound}, we have the capacity of every $s-t$ cut
$S/T$ verifies: $C(S) \ge M$. Hence $\Cmin(s,t) \ge M${}%

Conversely let us consider a specific cut, $S_s = \{ s \}$ 
and $T_s = \Lat \setminus \{ s \}$. Obviously
$s$ has at least one neighbor{} hence $\DS = \{ s \}$.
The capacity of the
cut is $C(S_s) = \sum_{v \in \Delta{}S} C_v = C_s = M$
and thus
$\Cmin(s,t) \le M$, and the theorem follows.\myqed
\end{proof}

\subsection{Proof of the Value of Min-Cut for Unit Disk Graphs}
\label{sec:unit-disk-proof}

In this section, we will prove a probabilistic result on
the min-cut, in the case of random unit disk graphs, using an virtual 
``embedded'' lattice.
The unit graph will be denoted $\cu{V}$, whereas for the embedded
lattice the notation of section~\ref{sec:min-cut-proof} is used:
$\Lat$ (along with $\LatBorder$ and $\LatInt$).
{}

{}

\subsubsection{Embedded Lattice}

Given the square area $L\times L$, we start with fitting a
\emph{rescaled} lattice inside it, with a scaling factor $r$.
Precisely, it is
the intersection of square $G$  and the set $\{ (rx,ry) : (x,y) \in \mathbb{Z}^2 \}$.

We will map the points of $G$ to the closest
point of the  rescaled lattice $\Lat$:
let us denote $\LatMap(x)$, the application that transforms
a point $u$ of the Euclidian space $\Real^2$ 
to its closest point of $\Lat$. Formally, for 
$u = (x,y) \in \mathbb{Z}^2$, 
\mymath{
\LatMap(x) \triangleq (r \lfloor \frac{x}{r} + \frac{1}{2} \rfloor, 
r \lfloor \frac{y}{r} + \frac{1}{2} \rfloor)}

For $u \in \Lat$, $\LatRevMap(u)$
is the set of nodes of $\cu{V}$ that are mapped to $u$.
The area of $\Real^2$ that is mapped to a same point of the lattice,
 is a square
$r \times r$ around that point. 
{}

Let $u$ be a point of the lattice $\Lat$, and let denote
the $m(u)$ the number of points of $\cu{V}$ that are
mapped to $u$ with $g$ (they are in the square around $u$ ; and
$m(u) \triangleq |\LatRevMap(u)|$).
{} $m(u)$ is a random variable.

Let us denote: \mymath{\mmin \triangleq \min_{u \in \Lat} m(u)
\mathrm{~and~} \mmax \triangleq \max_{u \in \Lat} m(u)}

\subsubsection{Neighborhood of the Embedded Lattice}

We start by defining the neighborhood $R$ for the embedded lattice.
{}We{} choose $R(r)$ to be the points of the lattice inside a
disk of radius $\rho - 2r$%
{}%
.{}%

{}
\begin{lemma}
\label{lem:induced-distance}
Let us consider two nodes of $u$, $v$ of $\cu{V}$ that are mapped 
on the lattice $\Lat$ to $u_\Lat$ and $v_\Lat$ respectively: \\
$\bullet$ if $u_\Lat$ and $v_\Lat$ are neighbors on the lattice, them
$u$ and $v$ are neighbors on the graph $\cu{V}$
\end{lemma}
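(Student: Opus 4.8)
The plan is to unwind the definitions of the two maps and use the triangle inequality. First I would recall the setup: a point $v\in\cu{V}$ is mapped by $\LatMap$ to the nearest lattice point $v_\Lat = \LatMap(v)$, and since the rescaled lattice has spacing $r$, the square of side $r$ around $v_\Lat$ contains $v$; hence the Euclidean distance $\|v - v_\Lat\|$ is at most $\frac{r}{\sqrt 2}\cdot\sqrt 2 = \ldots$ — more simply, $\|v - v_\Lat\|_\infty \le r/2$, so $\|v - v_\Lat\|_2 \le \frac{r}{\sqrt2} \le r$. The key constant to carry around is that \emph{each} point is displaced from its lattice image by at most $r$ (in fact by at most $r/\sqrt2$, but the looser bound $r$ is all that is needed, and it matches the ``$-2r$'' slack built into the choice of $R(r)$).

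Next I would assemble the chain of inequalities. Suppose $u_\Lat$ and $v_\Lat$ are neighbors on the embedded lattice; by the definition of $R(r)$ as the lattice points inside the disk of radius $\rho - 2r$ centred at the origin, this means $\|u_\Lat - v_\Lat\|_2 \le \rho - 2r$. Then by the triangle inequality,
\[
\|u - v\|_2 \le \|u - u_\Lat\|_2 + \|u_\Lat - v_\Lat\|_2 + \|v_\Lat - v\|_2 \le r + (\rho - 2r) + r = \rho .
\]
Since in the unit disk graph $\cu{V}$ two nodes are neighbors exactly when their distance is at most the radio range $\rho$, this shows $u$ and $v$ are neighbors in $\cu{V}$, which is the claim. (One should be a little careful about the boundary case $\|u-v\|_2 = \rho$: with the looser per-point bound $r$ this is the worst case and is still ``$\le\rho$'', so it is fine; if one worried about strict versus non-strict inequality in the definition of unit disk graphs, using the sharper displacement bound $r/\sqrt2$ gives $\|u-v\|_2 \le \rho - (2-\sqrt2)r < \rho$ with room to spare.)

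There is essentially no obstacle here — the lemma is a routine consequence of the triangle inequality and the deliberate $2r$ slack in the definition of $R(r)$. The only thing to be slightly attentive to is consistency of the metric (the lattice neighborhood $R$ and the radio range are both measured in the Euclidean $\ell_2$ norm of $\Real^2$) and the fact that a point's $\ell_2$-displacement from its lattice image is bounded by $r$; once those are stated, the three-term triangle inequality closes the argument. This lemma is the ``easy direction'' of the embedding comparison: it says lattice-adjacency implies graph-adjacency, so a cut in $\cu{V}$ that is small cannot come from a lattice configuration that the earlier lattice min-cut bound (\refth{th:capacity-bound}) already controls; the harder converse-type statement, relating graph neighbors back to lattice neighbors, is presumably handled separately and is where the real work (and the probabilistic estimates on $\mmin,\mmax$) will enter.
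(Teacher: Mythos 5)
Your proof is correct and matches the paper's approach, which simply cites ``triangle inequalities on the distances'': you bound each point's displacement from its lattice image by $r/\sqrt{2}\le r$, invoke the three-term triangle inequality, and use the deliberate $2r$ slack built into $R(r)$ to conclude $\|u-v\|\le\rho$. Your aside about strict versus non-strict inequality is also well-taken, since the paper defines unit-disk adjacency by distance strictly lower than $\rho$, and the sharper displacement bound $r/\sqrt{2}$ closes that gap.
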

This results from triangle inequalities on the distances.{}%

\subsubsection{Relationship between the Capacities of the Cuts of the Embedded Lattice and the Random Disk Unit Graph{}}
\label{sec:capacity-embedded}
{}

Let us consider one source $s \in \cu{V}$, one
destination $t \in \cu{V}$ and the capacity of any $S/T$ cut.
Every node of $S$ and $T$ is then mapped to the nearest point of the
embedded lattice. For the source, we denote: $s_\Lat = \lambda(s)$.

An \emph{induced cut} of the embedded lattice is constructed as follows:
{}%
\begin{compactitem}{}%
\item The border area width $W_\Lat$ is selected so as to be 
the greatest integer multiple of $r$ which is smaller than 
$W$ ; and $r < W - \rho${}
\item For any point of the lattice $v_\Lat \in \Lat$, 
   the rate $C^{(\Lat)}_{v_\Lat}$ is set according to \NOB{} on the lattice:
   $C^{(\Lat)}_{v_\Lat} = |R(r)|-1 $ when $v_\Lat$ is within the border
  area of width $W_\Lat$, and $C^{(\Lat)}_{v_\Lat} = 1$ otherwise.
\item $S_\Lat$ is the set including the point $s_\Lat$,
and the points of the lattice $\Lat$
such as only nodes of $S$ are mapped to them:
{}%
\\
\mymath{S_\Lat \triangleq \{ s_\Lat \} \cup 
  \{ u_\Lat : \LatRevMap(u_\Lat) \subset S \}}{}
\item $T_\Lat$ is the set of the rest of points of $\Lat$.
\end{compactitem}{}%
{}%
Note that $t \in T_\Lat$ ; that all the points of the lattice,
to which both points from  $S$ and $T$ are mapped, 
those points are in $T_\Lat$ ; and that the points to which no
points are mapped are in $S_\Lat$:
$S_\Lat/T_\Lat$ is indeed a partition and a $s_\Lat-t_\Lat$ cut.

\begin{lemma}
\label{lem:unit-disk-induced-cut-bound}
The capacity $C(S)$ of the cut $S/T$ and the capacity of the induced cut
$\CLat(S_\Lat)$ verify:
\mymath{C(S) \ge m_\mathrm{min} \CLat(S_\Lat)}
\end{lemma}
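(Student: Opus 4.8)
The plan is to compare the two cut capacities term by term, after identifying which lattice points contribute to $\CLat(S_\Lat)$. Recall $\CLat(S_\Lat) = \sum_{v_\Lat \in \DG S_\Lat} \CLat_{v_\Lat}$, where $\DG S_\Lat = \{ v_\Lat \in S_\Lat : (v_\Lat \oplus R(r)) \cap T_\Lat \neq \emptyset \}$. First I would show that for every such ``boundary'' lattice point $v_\Lat \in \DG S_\Lat$, all of the $\cu{V}$-nodes mapped to it, namely $\LatRevMap(v_\Lat)$, actually lie in $\DS$ (the boundary set of the original cut $S/T$). Indeed, since $v_\Lat \in S_\Lat$ and $v_\Lat$ is not $s_\Lat$ with some node of $T$ mapped to it, the construction of $S_\Lat$ forces $\LatRevMap(v_\Lat) \subset S$. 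And because $v_\Lat$ has a lattice-neighbor $w_\Lat \in T_\Lat$, that neighbor has at least one $\cu{V}$-node mapped to it which, by the construction of $T_\Lat$, must lie in $T$ (points with only $S$-nodes mapped go into $S_\Lat$); by Lemma~\ref{lem:induced-distance} every node in $\LatRevMap(v_\Lat)$ is a $\cu{V}$-neighbor of that node of $T$, hence belongs to $\DS$.

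Next I would handle the source point separately: $s_\Lat$ may or may not be in $\DG S_\Lat$, but in any case the single node $s$ lies in $\DS$ by the same argument (or trivially, since $s$ has a neighbor), and its rate on the lattice is $\CLat_{s_\Lat} = |R(r)| - 1$ while its rate in $\cu{V}$ is $C_s = M$; one checks $M \ge |R(r)|-1$ since $R(r)$ was chosen inside a disk of radius $\rho - 2r < \rho$, so the lattice-border rate is dominated by $M \ge m_\mathrm{min}\,(|R(r)|-1)$ provided $m_\mathrm{min} \ge 1$. For a non-source point $v_\Lat \in \DG S_\Lat$, its contribution $\CLat_{v_\Lat}$ is either $1$ (interior) or $|R(r)|-1$ (lattice-border); meanwhile each of the $m(v_\Lat) \ge m_\mathrm{min}$ nodes in $\LatRevMap(v_\Lat) \subset \DS$ contributes its own rate $C_v \ge 1$ to $C(S)$, and if $v_\Lat$ is a lattice-border point then (by the choice $W_\Lat < W$ so that the $r\times r$ square around $v_\Lat$ sits inside the width-$W$ border of $G$) those $\cu{V}$-nodes are border nodes of $\cu{V}$ with rate $M \ge |R(r)|-1$. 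Summing the inequality $\CLat_{v_\Lat} \le m_\mathrm{min}^{-1} \sum_{v \in \LatRevMap(v_\Lat)} C_v$ over all $v_\Lat \in \DG S_\Lat$, and noting the sets $\LatRevMap(v_\Lat)$ are disjoint and all contained in $\DS$, yields $\CLat(S_\Lat) \le m_\mathrm{min}^{-1} C(S)$, which is the claim.

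The main obstacle, and the step deserving the most care, is the bookkeeping at the border: one must make sure that the width choice $W_\Lat$ (greatest multiple of $r$ below $W$) together with $r < W - \rho$ really does guarantee that a lattice point counted as a border point of $\Lat$ has its entire preimage square inside the width-$W$ border region of $G$, so that every node of $\LatRevMap(v_\Lat)$ genuinely carries rate $M$ under \NOB{}; and conversely that interior lattice points are handled by the plain bound $C_v \ge 1$. The other delicate point is the edge effect near the boundary of the finite square $G$, where $v_\Lat \oplus R(r)$ can stick out of $\Lat$ — but since such points are automatically border points, they fall in the case already covered. Everything else is the disjointness of preimages and a term-by-term comparison, which is routine once the boundary classification is pinned down.
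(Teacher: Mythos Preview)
Your overall plan is the paper's own plan: use \reflem{lem:induced-distance} to show that for every $v_\Lat\in\DG S_\Lat$ the preimage $\LatRevMap(v_\Lat)\cap S$ lands in $\DS$, and then compare the two capacities term by term over the disjoint preimage squares. That part is fine and matches the sketch in the paper (with full details deferred to \cite{Bib:preprint}).

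There is, however, a genuine slip in your handling of the source (and, by the same token, the lattice--border) contribution. You write that ``$M\ge |R(r)|-1$ since $R(r)$ was chosen inside a disk of radius $\rho-2r<\rho$'', and then conclude ``$M\ge m_\mathrm{min}\,(|R(r)|-1)$ provided $m_\mathrm{min}\ge 1$''. Two problems. First, $M=\pi\rho^2\mu$ is an expected count of \emph{random} neighbors in $\cu{V}$, while $|R(r)|$ is a count of \emph{lattice} points; the smaller disk does not by itself give $M\ge|R(r)|-1$, and indeed this inequality can fail when $\mu r^2$ is small. Second, even if you had $M\ge|R(r)|-1$, multiplying the right side by $m_\mathrm{min}\ge 1$ makes it \emph{larger}, not smaller, so the implication goes the wrong way. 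What you actually need at the source (and what suffices at each lattice--border square, using a single preimage node of rate $M$) is the direct bound $m_\mathrm{min}(|R(r)|-1)\le M$. This does hold, but for a different reason: the $r\times r$ squares centred at the points of $R(r)$ are disjoint and contained in the disk of radius $\rho$, so $r^2|R(r)|\le\pi\rho^2$; and $m_\mathrm{min}$ is at most the average $\mu r^2$ (cf.\ the remark $Pr[\mmin>\mu r^2]=0$ in the paper). Multiplying gives $m_\mathrm{min}(|R(r)|-1)\le \mu r^2\cdot \pi\rho^2/r^2=\pi\rho^2\mu=M$. With this correction in place your term--by--term summation goes through as written; the rest of your bookkeeping (disjoint preimages, the $W_\Lat$ choice forcing preimage squares of lattice--border points into the width--$W$ border of $G$) is exactly the care the argument requires.
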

This comes from 
the fact that neighborhood on the lattice implies
neighborhood in $\cu{V}$ (\reflem{lem:induced-distance}), 
and then an inclusion is proved between the $\DS_\Lat$ 
of the capacity of cut of the lattice from \refeq{eq:s-t-cut}
and the $\DS$ of the cut $S/T$
\cite{Bib:preprint}.{}%

\begin{theorem}
\label{th:unit-disk-min-cut}
The min-cut $\Cmin(s)$ of the graph $\cu{V}$, verifies:

\mymath{\Cmin(s) \ge \mmin (|R(r)|-1)}
\end{theorem}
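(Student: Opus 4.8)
The strategy is to reduce the min-cut of the random unit disk graph $\cu{V}$ to the min-cut of the embedded lattice $\Lat$, which was already computed in Theorem~\ref{th:min-cut-lattice}. Fix the source $s$ and an arbitrary destination $t \in \cu{V}$, $t \neq s$, and let $S/T$ be \emph{any} $s$-$t$ cut of $\cu{V}$ achieving (or approaching) $\Cmin(s,t)$. First I would invoke the construction of the induced cut $S_\Lat/T_\Lat$ on the embedded lattice from Section~\ref{sec:capacity-embedded}: as noted there, $S_\Lat/T_\Lat$ is a genuine $s_\Lat$-$t_\Lat$ partition of $\Lat$ with $s_\Lat \in S_\Lat$ and $t_\Lat = \LatMap(t) \in T_\Lat$, where the rates $C^{(\Lat)}$ are the \NOB{} rates on $\Lat$ with neighborhood $R(r)$ and border width $W_\Lat$.

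Next I would chain the two inequalities already available. By Lemma~\ref{lem:unit-disk-induced-cut-bound}, $C(S) \ge \mmin \, \CLat(S_\Lat)$. By Theorem~\ref{th:min-cut-lattice} applied to the lattice $\Lat$ with its \NOB{} rate selection, every $s_\Lat$-$t_\Lat$ cut of $\Lat$ has capacity at least $M_\Lat = |R(r)| - 1$; in particular $\CLat(S_\Lat) \ge |R(r)| - 1$. Combining, $C(S) \ge \mmin\,(|R(r)|-1)$. Since $S/T$ was an arbitrary $s$-$t$ cut and $t$ an arbitrary destination, taking the minimum over all cuts and all $t$ gives $\Cmin(s) \ge \mmin\,(|R(r)|-1)$, which is the claim.

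The one point that needs care — and which I expect to be the main obstacle — is checking that Theorem~\ref{th:min-cut-lattice} genuinely applies to the \emph{rescaled, finite} lattice $\Lat$ with neighborhood $R(r)$ and border width $W_\Lat$, rather than only to the unit integer lattice stated in Section~\ref{sec:lattice-proof}. This requires confirming that the hypotheses used there survive rescaling: the Brunn--Minkowski-type inequality \refeq{eq:mink-ineq} is invariant under the affine rescaling $x \mapsto rx$ (it is purely combinatorial on the index set $\mathbb{Z}^2$), the border width condition $W_\Lat > \rho_\Lat$ analogous to $W > \rho$ holds because $W_\Lat$ is the largest multiple of $r$ below $W$ and $r < W-\rho$, and the connectivity-of-the-border argument used in the third case of Theorem~\ref{th:capacity-bound} carries over. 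A secondary subtlety is that the induced-cut construction forces all lattice points receiving a \emph{mix} of $S$- and $T$-nodes into $T_\Lat$, which is exactly what makes $S_\Lat/T_\Lat$ well-defined and compatible with $\mmin$ appearing (and not $\mmax$) in Lemma~\ref{lem:unit-disk-induced-cut-bound}; once that bookkeeping is granted, the rest is the two-line chaining above.
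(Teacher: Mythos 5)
Your proposal matches the paper's own proof exactly: both chain Lemma~\ref{lem:unit-disk-induced-cut-bound} (giving $C(S) \ge \mmin\,\CLat(S_\Lat)$) with Theorem~\ref{th:min-cut-lattice} applied to the induced cut on the embedded lattice (giving $\CLat(S_\Lat) \ge |R(r)|-1$), then take the minimum over cuts and destinations. Your explicit check that the lattice theorem survives rescaling to $\Lat$ with neighborhood $R(r)$ and border width $W_\Lat$ is a sound and worthwhile observation that the paper leaves implicit.
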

\begin{proof}

From \reflem{lem:unit-disk-induced-cut-bound},
any cut $C(S)$ is lower bounded by $\mmin \CLat(S_\Lat)$.
Since $\CLat(S_\Lat)$ is the capacity of a cut of a lattice with \NOB{}, 
\refth{th:min-cut-lattice} also indicates that:
$\CLat(S_\Lat) \ge \CLat_\mathrm{min}(s_\Lat) = |R(r)|-1$.
Hence the lower bound 
$\mmin (|R(r)|-1)$ for any $C(S)$, and therefore for the min-cut $\Cmin(s)$.
\end{proof}

\subsubsection{Nodes of $\cu{V}$ Mapped to One Lattice Point{}}

In~\refth{th:unit-disk-min-cut}, $\mmin$ plays a central part.
Let us start with $m(u_\Lat)$: it is actually a random variable
that is the sum of $N$ Bernoulli trials. With a Chernoff
tail bound~\cite{Bib:BL98}, we get, for $\delta \in ]0,1[$:\\
\mymath{Pr[m(u_\Lat) \le (1-\delta) E[m(u_\Lat)]] \le \exp( - \frac{E[m(u_\Lat)] \delta^2}{2} )}

A bound on $\mmin$ is deduced from the fact that it is the minimum of
$m(u)$ and from the fact that for
two events $A$ and $B$, \\
$Pr[ A$ or $B] \le Pr[A] + Pr[B]$:\\
\begin{theorem}
\label{th:bound}
\mymath{Pr[m_\mathrm{min} \le (1 - \delta) \mu r^2] \le \exp 
\left( (\log\frac{L^2}{r^2}) (1 -\frac{\mu r^2\delta^2}{2 \log\frac{L^2}{r^2}}) \right)}
\end{theorem}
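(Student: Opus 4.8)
The plan is to obtain the tail bound on $\mmin$ from the single-point Chernoff estimate stated just above by a union bound over all cells of the embedded lattice. First I would make the binomial structure of $m(u_\Lat)$ explicit: the $N$ nodes of $\cu{V}$ are placed independently and uniformly at random in $G$, and $\LatMap$ sends a node to $u_\Lat$ precisely when that node falls in the $r\times r$ cell $\LatRevMap(u_\Lat)$ centred at $u_\Lat$; for a cell contained in $G$ this happens with probability $r^2/L^2$, independently across the $N$ nodes, so $m(u_\Lat)$ is a sum of $N$ Bernoulli trials with $E[m(u_\Lat)] = N r^2/L^2 = \mu r^2$. This is exactly the regime in which the quoted Chernoff bound from \cite{Bib:BL98} applies.

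Second, I would instantiate that bound with the deviation corresponding to the threshold $(1-\delta)\mu r^2$: since $(1-\delta)\mu r^2 = (1-\delta)E[m(u_\Lat)]$, we get, for every lattice point $u_\Lat\in\Lat$, $Pr[m(u_\Lat)\le(1-\delta)\mu r^2]\le\exp(-\mu r^2\delta^2/2)$. Third, note that $\mmin\le(1-\delta)\mu r^2$ holds iff $m(u_\Lat)\le(1-\delta)\mu r^2$ for at least one $u_\Lat$, i.e. $\{\mmin\le(1-\delta)\mu r^2\}=\bigcup_{u_\Lat\in\Lat}\{m(u_\Lat)\le(1-\delta)\mu r^2\}$; iterating the subadditivity inequality $Pr[A$ or $B]\le Pr[A]+Pr[B]$ over the $|\Lat|$ cells (no independence is needed for this step) yields $Pr[\mmin\le(1-\delta)\mu r^2]\le|\Lat|\exp(-\mu r^2\delta^2/2)$.

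Finally I would bound the number of rescaled-lattice points inside a square of side $L$ by $|\Lat|\le L^2/r^2$ and write $|\Lat|=\exp(\log|\Lat|)\le\exp(\log(L^2/r^2))$, so that
\[
Pr[\mmin\le(1-\delta)\mu r^2]\;\le\;\exp\!\left(\log\frac{L^2}{r^2}-\frac{\mu r^2\delta^2}{2}\right)\;=\;\exp\!\left(\Big(\log\frac{L^2}{r^2}\Big)\Big(1-\frac{\mu r^2\delta^2}{2\log(L^2/r^2)}\Big)\right),
\]
which is the claimed inequality. I do not expect a genuine obstacle here: this is a textbook union-bound-over-Chernoff argument. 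The only points requiring a little care are (i) confirming the independence/binomial structure so the cited Chernoff bound literally applies; (ii) the treatment of cells truncated by the boundary of $G$, whose intersection with $G$ has area at most $r^2$ — these are handled either by the convention used to fit the lattice in $G$ or by a separate elementary estimate; and (iii) the cell count $|\Lat|\le L^2/r^2$ (up to lower-order boundary terms), which is what pins the bound down in exactly the stated form.
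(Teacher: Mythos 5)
Your proof is correct and follows exactly the route the paper takes: a Chernoff tail bound on each $m(u_\Lat)$ (a sum of $N$ Bernoulli trials with mean $\mu r^2$), a union bound over the $|\Lat| \le L^2/r^2$ lattice cells, and the algebraic rewriting $|\Lat|\exp(-\mu r^2\delta^2/2) \le \exp\bigl((\log\tfrac{L^2}{r^2})(1-\tfrac{\mu r^2\delta^2}{2\log(L^2/r^2)})\bigr)$. The boundary-cell caveat you flag is real but is not addressed in the paper's short proof either (it defers such details to the technical report), so your proposal is at the same level of rigor and makes the argument explicit.
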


\subsubsection{Asymptotic Values of the Min-Cut of Unit-Disk Graphs{}}
\begin{theorem}
\label{th:Cmin-limit}
For a sequence of random unit disk graphs and associated sources
$(\cu{V}_i, s_i \in \cu{V}_i)$, with fixed radio range $\rho$,
fixed border area width $W$, with a size $L_i \rightarrow \infty$,
and a density $M = L^\theta$ with fixed $\theta > 0$, we have
the following limit of the min-cut $\Cmin(s_i)$:\\
\mymath{\frac{\Cmin (s_i)}{M} \inprobHIGH 1 \mathrm{~in~probability. ~
Additionally:~} \frac{\Mmax}{M} \inprobHIGH 1}
\end{theorem}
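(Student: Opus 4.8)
The goal is to establish the two convergences $\frac{\Cmin(s_i)}{M}\inprobHIGH 1$ and $\frac{\Mmax}{M}\inprobHIGH 1$. The plan is to sandwich $\Cmin(s_i)$ between two quantities that both behave like $M$ up to lower-order terms: a lower bound from \refth{th:unit-disk-min-cut}, namely $\Cmin(s)\ge \mmin(|R(r)|-1)$, and an upper bound from the trivial source cut $S_s=\{s\}$, namely $\Cmin(s)\le C_s = \Mmax$ (the source is an exceptional node with IREN rate, and that rate is taken to be $\Mmax$, the maximum degree, so this cut has capacity $\Mmax$). So I would first reduce the theorem to proving (a) $\frac{\Mmax}{M}\inprobHIGH 1$ and (b) $\frac{\mmin(|R(r)|-1)}{M}\inprobHIGH 1$; combining these two with the sandwich gives $\frac{\Cmin(s_i)}{M}\inprobHIGH 1$ immediately.

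**Choosing the scaling parameter $r$.** The embedded-lattice construction has a free scale $r$, and the reduced neighborhood $R(r)$ consists of lattice points inside a disk of radius $\rho-2r$. If $r$ is fixed, $|R(r)|-1$ is a fixed constant, which is useless since $M=L^\theta\to\infty$. So $r$ must be chosen to shrink with $L$: I would take $r = r(L)\to 0$ slowly enough that the Chernoff bound of \refth{th:bound} still forces $\mmin$ to concentrate, yet fast enough that $\pi(\rho-2r)^2$ is essentially $\pi\rho^2$. Since $|R(r)|$ is the number of integer-lattice points (rescaled by $r$) in a disk of radius $\rho-2r$, we have $|R(r)|-1 = \frac{\pi(\rho-2r)^2}{r^2}(1+o(1)) = \frac{\pi\rho^2}{r^2}(1+o(1))$ as $r\to 0$. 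Meanwhile $\mu r^2 = \frac{N}{L^2}r^2 = \frac{M}{\pi\rho^2}r^2$, so $\mu r^2\,(|R(r)|-1) = M(1+o(1))$ by design — the point count per lattice cell times the expected population per cell recovers $M$. Thus $\mmin(|R(r)|-1)$ is $M(1+o(1))$ precisely when $\mmin/(\mu r^2)\inprobHIGH 1$, i.e. when the per-cell populations concentrate around their mean $\mu r^2$.

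**The concentration step (main obstacle).** This is where the real work lies. From \refth{th:bound}, $\Pr[\mmin\le(1-\delta)\mu r^2]\le \exp\!\big((\log\frac{L^2}{r^2})(1-\frac{\mu r^2\delta^2}{2\log\frac{L^2}{r^2}})\big)$. For this to vanish as $L\to\infty$ for every fixed $\delta\in(0,1)$, I need $\mu r^2 \gg \log\frac{L^2}{r^2}$, i.e. the expected per-cell population must dominate the log of the number of cells — the classic "balls in bins" threshold. Here $\mu r^2 = \frac{M r^2}{\pi\rho^2} = \frac{L^\theta r^2}{\pi\rho^2}$ and $\log\frac{L^2}{r^2}\sim 2\log L$ (for $r$ polynomially decaying), so I need $L^\theta r^2 \gg \log L$. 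Choosing, say, $r = L^{-\theta/4}$ gives $\mu r^2 \asymp L^{\theta/2}\gg \log L$, while $r\to 0$ as required; one must also check $r<W-\rho$ and $r<\rho/2$ eventually, which hold since $r\to0$. So the obstacle is really just a careful interlocking choice of $r(L)$ that simultaneously (i) makes the Chernoff exponent go to $-\infty$, (ii) keeps $(\rho-2r)^2/\rho^2\to1$ so $|R(r)|-1=\mu^{-1}r^{-2}\cdot M(1+o(1))$ wait — keeps $r^2(|R(r)|-1)\to\pi\rho^2$, and (iii) respects the side constraints $r<W-\rho$, $2r<\rho$. With such an $r$, $\mmin(|R(r)|-1)/M\inprobHIGH1$, establishing (b). Finally, for (a): $M\le\Mmax$ always fails — rather, $\Mmax$ is the max over $N$ nodes of a Binomial$(N-1,M/(N-1))$-type count (degree in the unit disk graph), and the same Chernoff/union-bound argument as in \refth{th:bound}, applied to the upper tail $\Pr[\Mmax\ge(1+\delta)M]\le N\exp(-M\delta^2/3)$, gives $\frac{\Mmax}{M}\inprobHIGH1$ since $M=L^\theta$ and $\log N = 2\log L = o(M)$. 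This closes both halves; I expect the bookkeeping around the boundary effects (nodes near $\partial G$ have truncated neighborhoods, so their degree is only $\le M$, which does not hurt the upper bound on $\Mmax$, and $\mmin$ over boundary cells needs $W_\Lat$ to be a multiple of $r$ — already arranged in the construction) to be the only remaining fussy point.
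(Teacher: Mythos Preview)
Your proposal is essentially correct and follows the same route as the paper: sandwich $\Cmin(s)$ between the embedded-lattice lower bound $\mmin(|R(r)|-1)$ from \refth{th:unit-disk-min-cut} and the source-cut upper bound, let $r\to 0$ polynomially in $L$ so that $r^2(|R(r)|-1)\to\pi\rho^2$ while the Chernoff/union bound of \refth{th:bound} still forces $\mmin/(\mu r^2)\inprobHIGH 1$, and handle $\Mmax/M$ by the symmetric upper-tail Chernoff bound.

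Two small points worth flagging. First, the IREN rate for the source (and border nodes) is $M$, not $\Mmax$; so the source cut $S_s=\{s\}$ already gives $\Cmin(s)\le M$ deterministically, and the upper half $\limsup \Cmin(s)/M\le 1$ is immediate without invoking $\Mmax$. Second, the paper makes the slightly different parameter choice $r=\delta=L^{-\theta/8}$, i.e.\ it lets $\delta$ vanish simultaneously with $r$, which yields $\mmin/(\mu r^2)\inprobHIGH 1$ in one stroke rather than arguing for each fixed $\delta$. Your choice $r=L^{-\theta/4}$ with fixed $\delta$ is equally valid; both satisfy the key requirement $\mu r^2\delta^2\gg \log(L^2/r^2)$.
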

\begin{proof}
Notice that $Pr[ \mmin > \mu r^2 ] = 0$, because
otherwise we would have a minimum of some values
$m(u_\Lat)$ greater than their average. 
Starting from \refth{th:bound}, several variables appear:
$L$, $\mu$, $\delta$, and $r$. Assume that $\rho$ is fixed
and that $\mu = K L^\theta$ for some fixed $\theta>0$ and $K>0$.
Then we propose the following settings:
$\delta = L^{-\frac{\theta}{8}}$ ;  $r = L^{-\frac{\theta}{8}}$

As a result, from~\refth{th:bound} we have:
$\frac{\mmin}{\mu r^2} \inprobHIGH 1$. Now
\refth{th:unit-disk-min-cut} gives:\\
$\Cmin(s) \ge \mmin (|R(r)|-1)$. Hence:
$\frac{\Cmin(s)}{M} \ge \frac{\mmin}{\mu r^2} \frac{\mu}{M} r^2 (|R(r)|-1) $ \\
From 
the fact that 
$|R(r)| = \pi (\frac{\rho}{r})^2 + O(\frac{1}{r})$ \cite{Bib:preprint},
and that $M=\pi \rho^2 \mu$,  we get the lower bound $1$ for the limit of
$\frac{\Cmin}{M}$. The upper bound comes indirectly from 
the fact that $\frac{\Mmax}{M} \inprobHIGH 1$ 
\cite{Bib:preprint}.\myqed{}
\end{proof}

\section{Conclusion}
\label{sec:conclusion}

We have presented a simple rate selection for network coding for large
sensor networks. We computed the broadcast performance from
the min-cut with networks modeled as hypergraphs.
The central result is that selecting nearly the same rate for all nodes
achieves asymptotic optimality for the homogeneous networks that are
presented, when the size of the networks becomes larger. This can 
be translated into this remarkable property: 
nearly every transmission becomes innovative for the receivers.
As a result, it was shown that network coding would asymptotically
outperform any method that does not use network coding.
We believe that the results presented here are a first step for
a simple but efficient rate selection in wireless sensor networks
in the plane. Future research work will determine
how to adapt the rate selection
for smaller and less homogeneous networks.

{}
\begin{chapthebibliography}{1}{}

\bibitem{Bib:ACLY00}
R. Ahlswede, N. Cai, S.-Y. R. Li and R. W. Yeung, \emph{``Network
Information Flow''}, IEEE Trans. on Information Theory, vol. 46, no.4, pp.
1204-1216, Jul. 2000

{}

\bibitem{Bib:DGPHE06}
A. Dana, R. Gowaikar, R. Palanki, B. Hassibi, and M. Effros,
\emph{``Capacity of Wireless Erasure Networks''},
IEEE Trans. on Information Theory, vol. 52, no.3, pp. 789-804,
Mar. 2006

\bibitem{Bib:LMKE07}
D. S. Lun, M. M\'edard, R. Koetter, and M. Effros,
\emph{``On coding for
reliable communication over packet networks''}, Technical Report \#2741, 
MIT LIDS, Jan. 2007

\bibitem{Bib:WCK05} Y. Wu, P. A. Chou, and S.-Y. Kung, 
\emph{``Minimum-energy multicast in mobile ad hoc networks using network 
coding''}, IEEE Trans. Commun., vol. 53, no. 11, pp. 1906-1918, Nov. 2005

\bibitem{Bib:LRMKKHAZ06} D. S. Lun, N. Ratnakar, M. M\'edard, R. Koetter,
  D. R. Karger, T. Ho, E. Ahmed, and F. Zhao,
\emph{``Minimum-Cost Multicast over Coded Packet Networks''},
IEEE/ACM Trans. Netw., vol. 52, no. 6, pp 2608-2623, Jun. 2006

\bibitem{Bib:RSW05}
A. Ramamoorthy, J. Shi, and R. D. Wesel, 
\emph{``On the Capacity of Network Coding for Random Networks''},
IEEE Trans. on Information Theory, Vol. 51 No. 8, pp. 2878-2885,
Aug. 2005

\bibitem{Bib:AKMK07}
S. A. Aly, V. Kapoor, J. Meng, and A. Klappenecker, 
\emph{``Bounds on the Network Coding Capacity for Wireless Random Networks''}, 
Third Workshop on Network Coding, Theory, and Applications (Netcoding07), 
Jan. 2007

\bibitem{Bib:CB07}
R. A. Costa and J. Barros. 
\emph{``Dual Radio Networks: Capacity and Connectivity''},
Spatial Stochastic Models in Wireless Networks
(SpaSWiN 2007), {}Apr. 2007{}

\bibitem{Bib:LMKE05}
D. S. Lun, M. M\'edard, R. Koetter, and M. Effros,
\emph{``Further Results on Coding for Reliable Communication 
over Packet Networks''} 
International Symposium on Information Theory (ISIT 2005), Sept. 2005

\bibitem{Bib:HKMKE03}
T. Ho, R. Koetter, M. M\'edard, D. Karger and M. Effros, \emph{``The
Benefits of Coding over Routing in a Randomized Setting''}, 
International Symposium on Information Theory (ISIT 2003), Jun. 2003

\bibitem{Bib:CCJ02}
B. Clark, C. Colbourn, and D. Johnson,
\emph{``Unit disk graphs''}, Discrete Mathematics, Vol. 86, Issues 1-3,
Dec. 1990

\bibitem{Bib:FWB06}
C. Fragouli, J. Widmer, and J.-Y. L. Boudec,
\emph{``A Network Coding Approach to Energy Efficient Broadcasting''},
Proceedings of INFOCOM 2006, Apr. 2006

\bibitem{Bib:LKE98} I.K. Lee, M.S. Kim, G. Elber,
\emph{``Polynomial/Rational Approximation of Minkowski Sum Boundary Curves''},
Graphical Models and Image Processing, Vol. 69, No. 2, pp 136-165, Mar. 1998

\bibitem{Bib:GG01} R. J. Gardner, P. Gronchi, 
\emph{``A Brunn-Minkowski inequality for the integer lattice''},
Trans. Amer. Math. Soc., 353 (2001), 3995-4042

\bibitem{Bib:BL98} P. Barbe, M. Ledoux,
\emph{Probabilit\'e}
Editions Espaces 34, Belin, 1998.

\bibitem{Bib:preprint}
C. Adjih, S. Y. Cho, P. Jacquet,
\emph{``Near Optimal Broadcast with Network Coding in Large Homogeneous
Networks''}, 
\verb|http://hal.inria.fr/inria-00145231/en/|,
INRIA Research Report, May 2007

{}
\end{chapthebibliography}{}

\end{document}